\newtheorem{assumption}{Assumption}{}
\newtheorem{theorem}{Theorem}{}
{}
{}
{}
\newtheorem{proposition}{Proposition}{}
\newtheorem{remark}{Remark}{}
{}
{}
\title{A Stabilizing Control Algorithm for Asynchronous Parallel Quadratic Programming via Dual Decomposition}
\author{Kooktae Lee\thanks{Address all correspondence to this author.} 
    \affiliation{
	Department of Mechanical Engineering, New Mexico Institute of Mining and Technology\\
	Socorro, New Mexico 87801,  Email: kooktae.lee@nmt.edu
    }	
}
\begin{document}

\maketitle    

\begin{abstract}
{\it 
This paper proposes a control algorithm for stable implementation of asynchronous parallel quadratic programming (PQP) through dual decomposition technique. In general, distributed and parallel optimization requires synchronization of data at each iteration step due to the interdependency of data. The synchronization latency may incur a large amount of waiting time caused by an idle process during computation. We aim to mitigate this synchronization penalty in PQP problems by implementing asynchronous updates of the dual variable. The price to pay for adopting asynchronous computing algorithms is the unpredictability of the solution, resulting in a tradeoff between speedup and accuracy. In the worst case, the state of interest may become unstable owing to the stochastic behavior of asynchrony. We investigate the stability condition of asynchronous PQP problems by employing the switched system framework. A formal algorithm is provided to ensure the asymptotic stability of dual variables. Further, it is shown that the implementation of the proposed algorithm guarantees the uniqueness of optimal solutions, irrespective of asynchronous behavior. To verify the validity of the proposed methods, simulation results are presented.
}
\end{abstract}

\begin{nomenclature}
\entry{$\mathbb{R}$}{A set of real numbers}
\entry{$\mathbb{N}$}{A set of natural numbers}
\entry{$\mathbb{N}_0$}{A set of non-negative integers}
\entry{$k$}{A discrete-time index such that $k\in\mathbb{N}_0$}
\entry{$\mathbb{K}_q$}{A set defined by $\{k,k-1,\ldots,k-q+1\}$, where $q\in\mathbb{N}$}
\entry{$\lVert\cdot \rVert_p$}{A p-norm of a given vector or matrix}
\entry{$\rho(\cdot)$}{A spectral radius of a given matrix}
\entry{$^{T}$}{A transpose operator}
\entry{$\otimes$}{A Kronecker product}
\end{nomenclature}

\section*{INTRODUCTION}
Recent advancement of distributed and parallel computing technologies has brought massive processing capabilities in solving large-scale optimization problems. Distributed and parallel computing may reduce computation time to find an optimal solution by leveraging the parallel processing in computation. Particularly, distributed optimization will likely be considered as a key element for large-scale statistics and machine learning problems that have a massive volume of data, currently represented by the word ``big data."
One of the reasons for the preference of distributed optimization in big data is that the size of the data set is so huge that each data set is desirably stored in a distributed manner. Thus, the global objective is achieved in conjunction with local objective functions assigned to each distributed node, which requires communication between distributed nodes in order to attain an optimal solution. 

For several decades, there have been remarkable studies that have enabled us to find an optimal solution in a decentralized fashion, for example, dual decomposition \cite{dantzig1960decomposition,benders1962partitioning}, 
augmented Lagrangian methods for constrained optimization
\cite{fortin2000augmented,aguiar2011augmented},
alternating direction method of multipliers (ADMM) \cite{shi2014linear,chen2016direct}, 
Bregman iterative algorithms for $\ell_1$ problems\cite{bregman1967relaxation,yin2008bregman}, 
Douglas-Rachford splitting \cite{combettes2007douglas}, 
and proximal methods \cite{tseng1997alternating}. 
More details about the history of developments in the methods listed above can be found in the literature \cite{boyd2011distributed}.
In this study, we mainly focus on the analysis of \textit{asynchronous} distributed optimization problems. In particular, we aim to investigate the behavior of asynchrony in the Lagrangian dual decomposition method for parallel quadratic programming (QP) problems, where QP problems refer to the optimization problems with a quadratic objective function associated with linear constraints. This type of QP problem has broad applications including least square with linear constraints, regression analysis and statistics, SVMs, LASSO, portfolio optimization problems, etc. With the implementation of Lagrangian dual decomposition, the original QP problems that are separable can be solved in a distributed sense. For this dual decomposition technique, we will study how asynchronous computing algorithms affect the stability of a dual variable.

Typically, distributed optimization requires synchronization of the data set at each iteration step due to the interdependency of data. For massive parallelism, this synchronization may result in a large amount of waiting time as load imbalance between distributed computing resources would take place at each iteration step. In this case, some nodes that have completed their tasks should wait for others to finish assigned jobs, which causes an idle process of computing resources, and hence a waste of computation time.
This paper attacks this restriction on the synchronization penalty necessarily required in distributed and parallel computing, through the implementation of \textit{asynchronous computing algorithms}. The asynchronous computing algorithm that does not suffer from synchronization latency thus has the potential to break through the paradigm of distributed and parallel optimization. 
Unfortunately, it is not completely revealed yet what is the effect of asynchrony on the stability in the distributed and parallel optimization. Due to the stochastic behavior of asynchrony, an optimal solution for an asynchronous PQP problem may diverge even if it is guaranteed that the synchronous scheme provides stability of the optimal solution. Although Bertsekas \cite{bertsekas1989parallel} introduced a sufficient condition for the stability of general asynchronous fixed-point iterations (see chapter 6.2), which is equivalent to a diagonal dominance condition for QP problems, this condition is known to be very strong and thus conservative, according to the literature \cite{liu2015asynchronous}. 
More recently, a periodic synchronization method has been developed in \cite{lee2016relaxed,lee2016relaxed2,ghosh2018fast}. The key idea is that each node communicates with other nodes at a certain period, instead of synchronizing the data at each iteration step. The dynamics of the given iterative scheme was reformulated by opting for a desirable synchronization period, leading to the synchronization that occurs less frequently than the original parallel computing algorithm. However, the periodic synchronization still requires the idle process time with a given periodicity as an optimal solution is obtained through periodic communications between distributed nodes. Thus this method is not exempt from the synchronization bottleneck issue completely.

In this paper, we develop a stabilizing control algorithm to securely implement the asynchronous computing algorithm for PQP problems. For this purpose, the \textit{switched system} \cite{fang2002stochastic,fang2002stabilization,
lee2018optimal,lee2015performance,lee2015acc,lee2015stability} framework is adopted as a stability analysis tool. In general, the switched system is defined as a dynamical system that consists of a set of subsystem dynamics and a certain switching logic that governs a switching between subsystems. For asynchronous computing algorithms, subsystem dynamics accounts for all possible scenarios of delays in asynchronous computing caused by difference in data processing time or load imbalance in each distributed computing device. Then, a certain switching logic can be implemented to stand for a random switching between subsystem dynamics. Thus, the switched system framework can be used to properly model the dynamics of asynchronous computing algorithms. 
Lee \textit{et al.}\cite{lee2015acc}, for example, introduced the switched system to represent the behavior of asynchrony in massively parallel numerical algorithms. This study applied the switched dynamical system framework in order to analyze the convergence, rate of convergence, and error probability for asynchronous parallel numerical algorithms. 

Based on this switched system framework, we will provide the control method for the stable implementation of asynchronous PQP algorithms with dual decomposition technique. Regardless of asynchronous behavior, the proposed algorithm guarantees the stability of the asynchronous optimal solutions. Moreover, the convergence of asynchronous optimal solutions to synchronous one is also provided, which guarantees the uniqueness of stationary solutions.

\vspace{-0.15in}
\section*{PRELIMINARIES AND PROBLEM FORMULATION}
\subsection*{Duality Problem}
Consider the following QP problem with a linear inequality constraint.
\begin{equation}
\begin{aligned}
&\text{minimize} & f(x)\\
&\text{subject to} & Ax \leq b,\label{eqn:Ax<b}
\end{aligned}
\end{equation}
where $f(x)$ is given by a quadratic form, meaning $f(x) = \dfrac{1}{2}x^{T}Qx + c^{T}x$, the matrix $Q\in\mathbb{R}^{n\times n}$ is a symmetric, positive definite and $c\in\mathbb{R}^n$. Further, $A\in\mathbb{R}^{m\times n}$ and $b\in\mathbb{R}^{m}$.
If we define Lagrangian as $L(x,y) \triangleq f(x) + y^{T}(Ax-b)$, where $y\in\mathbb{R}^m$ is the dual variable or Lagrange multiplier, then the dual problem for above QP problem is formulated as follows.
\newline

\noindent Duality using Lagrangian:
\begin{equation}
\begin{aligned}
&\text{maximize}& \inf_x L(x,y)\label{eqn:inf L(x,y)}\\
&\text{subject to}& y \geq 0.
\end{aligned}
\end{equation}
The primal optimal point $x^{\star}$ is obtained from a dual optimal point $y^{\star}$ as
$
x^{\star} = \underset{x}{\mathrm{argmin}}\: L(x,y^{\star}).
$
By implementing gradient ascent, one can solve the dual problem, provided that $\inf L(x,y)$ is differentiable. In this case, the iteration to find the $x^{\star}$ is constructed as follows:
\begin{equation}
\begin{aligned}
x^{k} &:= \underset{x}{\mathrm{argmin}}\:L(x,y^k),\, y^{k+1} := y^k + \alpha^k(Ax^{k} - b),
\end{aligned}\label{eqn:x^(k+1) = argmin L(x,y^k)}
\end{equation}
where $\alpha^k$ is a step size and the upper script denotes the discrete-time index for iteration.

For the quadratic objective function $f(x)$, the solution for $\underset{x}{\mathrm{argmin}}\:L(x,y^k)$ can be obtained from $\nabla_x L(x,y^k) = 0$ by
\begin{equation}
\begin{aligned}
\underset{x}{\mathrm{argmin}}\:L(x,y^k) &= \nabla_x \left(\dfrac{1}{2}x^{T}Qx + c^{T}x + {y^{k}}^{T}(Ax-b)\right)\\
& = Qx + c + A^{T}y^k = 0,\label{eqn: Qx+c+A^Ty^k=0}
\end{aligned}
\end{equation}
leading to $x^{k} = -Q^{-1}(A^{T}y^k + c)$. 
Plugging this equation into \eqref{eqn:x^(k+1) = argmin L(x,y^k)} yields
\begin{equation}
\begin{aligned}
y^{k+1} &= y^k + \alpha^k\left(A\left(-Q^{-1}(A^{T}y^k+c)\right) -b\right)\\
&= (I-\alpha^kAQ^{-1}A^{T})y^k - \alpha^k(AQ^{-1}c+b).\label{eqn:y^(k+1)=(I-alpha)y^(k+1)-alp(+b)}
\end{aligned}
\end{equation}

\vspace{-0.15in}
\subsection*{Dual Decomposition with Synchronous update}
In this subsection, we consider that $f(x)=\frac{1}{2}x^{T}Qx + c^{T}x$ is \textit{separable}, i.e.,
\begin{align*}
f(x) &= \sum_{i=1}^{N}f_i(x_i)= \sum_{i=1}^{N}\left(\dfrac{1}{2}x_i^{T}Q_ix_i + c_i^{T}x_i\right),
\end{align*}
where $x = [x_1^{T}, x_2^{T}, \hdots, x_N^{T}]^{T}$ and $x_i\in\mathbb{R}^{n_i}$, $i=1,2,\hdots,N$, is a subvector of $x$.
Also, the matrix $A$ in \eqref{eqn:Ax<b} satisfies $Ax = \sum_{i=1}^{N}A_ix_i$, where $A_i$ is such that $A = [A_1, A_2, \hdots, A_N]$.

Then, each $x_i$ is updated by
\begin{equation}
\begin{aligned}
x_i^{k} &:= \underset{x_i}{\mathrm{argmin}}\:L(x_i,y^k) = -Q_i^{-1}(A_i^{T}y^k + c),\\
\end{aligned}\label{eqn:y^(k+1) = y^k with x_i}
\end{equation}
Note that when updating $x_i^{k}$, $i=1,2,\hdots,N$, each value is computed by distributed nodes. Hence, the computation for $x_i^{k}$ can be processed in parallel and then, each value of $x_i^{k}$ is transmitted to the master node to compute $y^{k+1}$ in the gathering stage. Therefore, as in \eqref{eqn:y^(k+1) = y^k with x_i}, updating $y^{k+1}$ requires synchronization of $x_i^{k}$ across all spatial index $i$ at time $k+1$ because $x^{k}$ is obtained by stacking $x_i^{k}$  from $i=1$ to $N$.  
If computing and/or communication delay occurs among one of the index $i$, the process to update $y^{k+1}$ has to be paused until all data is received from distributed nodes. 
This implies that the more parallel computing we have, the more delays may take place, resulting in a large amount of idle process time. Consequently, this idle time for synchronization becomes dominant compared to the pure computation time to solve the QP problem in parallel.
To mitigate or avoid this type of restriction that severely affects on the performance to obtain an optimal solution, we introduce \textit{asynchronous computing algorithm} in the following subsection.

\vspace{-0.15in}
\subsection*{Dual Decomposition with Asynchronous update}
In order to alleviate this synchronization penalty, we consider asynchronous updates of dual variable $y$. In this case, the master node to compute $y^{k+1}$ does not need to wait until all $x_i^{k}$ is gathered. Rather, it proceeds with the specific value $x_i$ saved in the buffer memory. Thus, $y$ value is updated asynchronously. To model the asynchronous dynamics of dual decomposition, we consider the new state vector for the asynchronous model $\tilde{x}^{k} := [{x_1^{k_1^*}}^{T}, {x_2^{k_2^*}}^{T}, \hdots, {x_N^{k_N^*}}^{T}]^{T}$,
where $k_i^* \in \mathbb{K}_q$, $i=1,2,\hdots, N$, denotes the delay term that may take place due to computation and/or communication delays across $N$ distributed nodes, and the variable $q\in\mathbb{N}$ denotes the cardinality of a set $\mathbb{K}_q$.
In Fig. \ref{fig:2}, we illustrated the conceptual schematic of asynchronous update.
\begin{figure}[!h]
\centering
\includegraphics[scale=0.4]{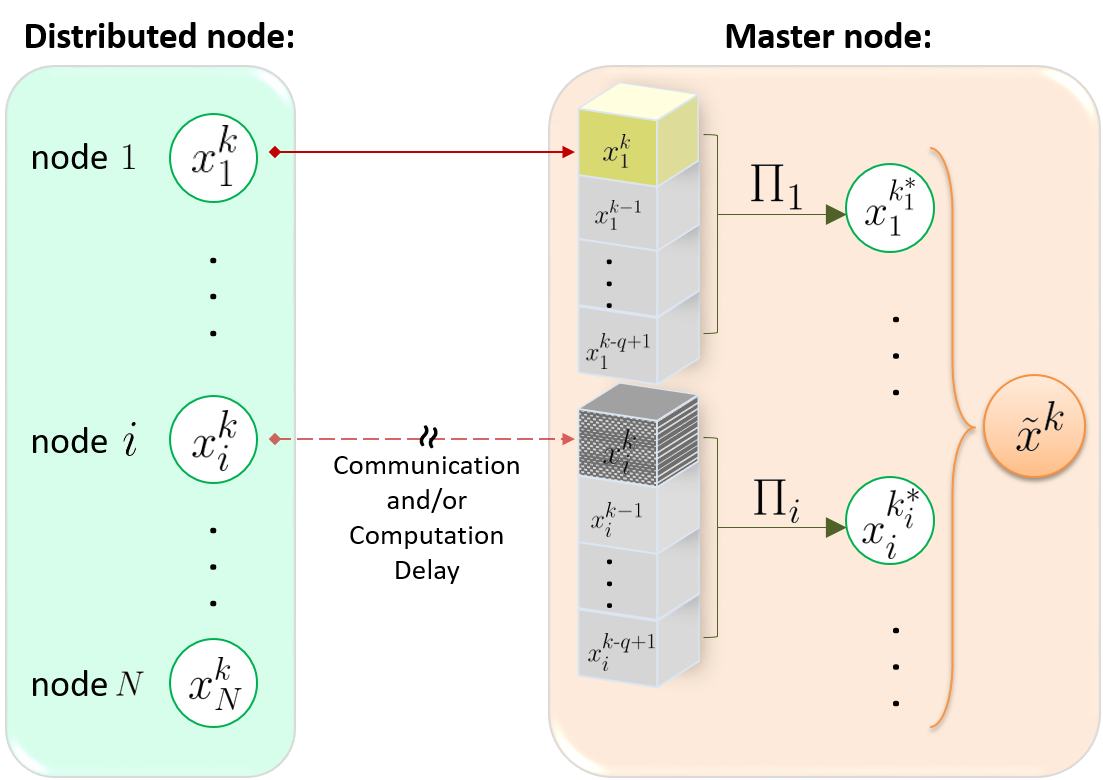}
\begin{align*}
q\:\: &: \text{the cardinality of a set } \mathbb{K}_q:=\{x^k, x^{k-1},\hdots, x^{k-q+1}\}\\
x_i^k &: \text{the value of } x_i \text{ at time }k\\
{x_i^{k_i^*}} &: \text{a random variable such that }{k_i^*} \in \mathbb{K}_q\\
\Pi_i &:= [\Pi_i(1), \Pi_i(2),\hdots, \Pi_i(q)], \text{ where $\Pi_i(j)$ represents}\\ 
& \quad \text{the $j^{th}$ modal probability for }{k_i^*}\\
\tilde{x}^k &:= [{x_1^{k_1^*}}^{T},{x_2^{k_2^*}}^{T}, \hdots, {x_N^{k_N^*}}^{T}]^{T}
\end{align*}
\caption{The schematic of the asynchronous computing algorithm for a PQP problem. In this figure, the delay is bounded by $k-q+1\leq k_i^* \leq k$, $\forall i$. Each node has the probability $\Pi_i$ to represent randomness of delays.}\label{fig:2}
\end{figure}

For this asynchronous case, $y$-update is given by
\begin{equation}
\begin{aligned}
y^{k+1} &= y^k + \alpha^k(A\tilde{x}^{k} - b)= y^k + \sum_{i=1}^{N}\left(\alpha_i^kA_i\tilde{x_i}^{k} - \dfrac{1}{N}\alpha_i^k b\right),\label{eqn:y^(k+1) = y^k + sum_tilde_x_i^(k+1)-b}
\end{aligned}
\end{equation}
where $\alpha_i^k$ is the step size for the index $i$. 
Although $\alpha_i^k$ may vary at each time step, we let $\alpha_i^k$ be a constant value, denoted by $\alpha_i$, for simplicity.

In the real implementation of distributed and parallel optimization, $k_i^*$ varies from distributed nodes and also changes over each iteration. Thus, $x_i^{k_i^*}$ becomes a random vector affected by a random variable $k_i^*\in\mathbb{K}_q$.

Starting from \eqref{eqn:y^(k+1) = y^k + sum_tilde_x_i^(k+1)-b}, with the definition of the set 
$\mathcal{S}^k := \{k_i^*| k_i^* = k\}$ and the symbol $\Phi_i:=-\alpha_iA_iQ_i^{-1}A_i^{T}$, the state dynamics of the asynchronous algorithm can be rewritten by

\begin{align}
&y^{k+1} 
 = \left(I + \sum_{i\in\mathcal{S}^{k}} \Phi_i\right)y^{k} + \left(\sum_{i\in\mathcal{S}^{k-1}} \Phi_i\right)y^{k-1} + \cdots \quad\quad \left(\text{from \eqref{eqn:y^(k+1) = y^k with x_i}}\right)\nonumber\\
& + \left(\sum_{i\in\mathcal{S}^{k-q+1}} \Phi_i \right)y^{k-q+1} + \left(\sum_{i=1}^{N} -\alpha_iA_iQ_i^{-1}c - \dfrac{1}{N} \alpha_ib\right).\label{eqn:2.14}
\end{align}

The above equation is simplified by
\begin{equation}
\begin{aligned}
y^{k+1} &= R_{11}^ky^k + R_{12}^ky^{k-1} + \cdots + R_{1q}^ky^{k-q+1} + B, \label{eqn:async. y^(k+1) update}
\end{aligned}
\end{equation}
where
\begin{align*}
R_{1j}^k &=
\left\{
\begin{array}{ll}
I+\displaystyle \sum_{i\in\mathcal{S}^{k}}\Phi_i, & j=1\\
\\
\displaystyle \sum_{i\in\mathcal{S}^{k-j+1}}\Phi_i, & j\neq1
\end{array}
\right. ,\,\,\, B := \left(\sum_{i=1}^{N} -\alpha_iA_iQ_i^{-1}c - \dfrac{1}{N} \alpha_i b\right). 
\end{align*}
Remind that the evolution of $y^k$ forms a stochastic process due to the time-varying structure of $R_{1j}^k$, caused by the random variable $k_i^{\star}$. 

For stability of general fixed-point iteration, Bertsekas and Tsitsiklis \cite{bertsekas1989parallel} developed  the sufficient condition, which guarantees that the dual variable $y^k$ under the asynchronous computation \eqref{eqn:async. y^(k+1) update} is stable, if
\begin{align}
\rho\left(\Big| I + \sum_{i=1}^{N}\Phi_i\Big| \right) < 1, \label{eqn: Bertsekas stability condition}
\end{align}
where $|\cdot|$ represents the absolute value of all elements for a given matrix.

However, the above stability condition may lead to conservatism, and thus the asynchronous scheme may become stable even if the condition \eqref{eqn: Bertsekas stability condition} is violated. This concern is also addressed in \cite{liu2015asynchronous}.
The primary goal of this paper is therefore to 1) develop the stability condition for the asynchronous PQP algorithm with less conservatism; 2) design a control algorithm that stabilizes the asynchronous scheme. For this purpose, we adopt a \textit{switched linear system (or jump linear system, interchangeably)} framework that will be introduced in the next section.

\section*{MODELING: A SWITCHED SYSTEM APPROACH}
In order to solve the dual decomposition problem that has random delays across distributed nodes, we define a new augmented state $Y^{k} := [{y^k}^{T}, {y^{k-1}}^{T}, \hdots, {y^{k-q+1}}^{T}]^{T}$.
Then, the following recursive dynamics can be used to denote the behavior of asynchronous computing algorithm.
\begin{eqnarray}
&Y^{k+1} = W^{k}Y^{k} + C,\\
&\text{where}\quad W^k =
\begin{bmatrix}
R_{11}^{k} & R_{12}^{k} & R_{13}^{k} &\cdots & R_{1q}^{k}\\
I & 0 & \cdots & & 0\\
0 & I & 0 & \cdots & 0\\
\vdots &&\ddots&&\vdots \\
0 & 0 &&I & 0
\end{bmatrix}
,\quad
C =
\begin{bmatrix}
B\\
0\\
0\\
\vdots\\
0
\end{bmatrix}.\label{eqn:W^k}
\end{eqnarray}

In fact, the structure of the time-varying matrix $W^{k}$ is not arbitrary but has a finite number of forms, given by $q^N$, which counts all possible scenarios to distribute $N$ numbers of $\Phi_i$, $i=1,2,\hdots,N$, matrices defined in \eqref{eqn:2.14} into the number $q$. In the switched system, this number is referred to as the \textit{switching mode number}, and we particularly denote this number with the symbol $\eta$. For instance, when $q=2$ and $N=2$, the switching mode number is given by $\eta=2^2=4$. Thus, at each time $k$, the matrix $W^{k+1}$ has one of the following form:
\begin{equation}
\begin{aligned}
W_1 = \begin{bmatrix}
I+\Phi_1 + \Phi_2 &\quad 0\\
I &\quad 0
\end{bmatrix}, \qquad
W_2 = \begin{bmatrix}
I + \Phi_1 &\quad \Phi_2\\
I &\quad 0
\end{bmatrix}, \\ 
W_3 = \begin{bmatrix}
I + \Phi_2 &\quad \Phi_1\\
I &\quad 0
\end{bmatrix}, \qquad
W_4 = \begin{bmatrix}
I &\quad \Phi_1  +\Phi_2\\
I &\quad 0
\end{bmatrix}.
\end{aligned}\label{eqn: modal matrcies W_i}
\end{equation}
Then, only one among a set of all matrices $\{W_{r}\}_{r=1}^{\eta}$ will be used at each time $k$ to update the system state $Y^k$, which results in the \textit{switched linear system} structure as follows:
\begin{align}
Y^{k+1} = W_{\sigma_{k}}Y^k + C,\:\:\:\sigma_k \in \{1,2,\hdots,\eta\},\:k\in\mathbb{N}_0,\label{eqn:Y^(k+1) = W_sigmaY^(k+1) + Z}
\end{align}
where $\{\sigma_k\}_{k=0}^{\infty}$ denotes the switching sequence that describes how the asynchrony takes place. Then, the switching probability $\Pi^{k} := \Pi_1^{k}\otimes \Pi_2^{k}\otimes \cdots\otimes \Pi_N^{k} = [\pi_1^{k}, \pi_2^{k},$ $ \hdots, \pi_{\eta}^{k}]$, where $\Pi_i^{k}$ represents the probability for ${k_i^*}$ at time $k$ as depicted by Fig. \ref{fig:2}, determines which mode $\sigma_k$ will be utilized at each time step. 
(Note that $\Pi_i^{k}$ and hence $\Pi^{k}$ are not necessarily to be stationary.)
In this case, the switched linear system is named by ``stochastic switched linear system" or ``stochastic jump linear system"  \cite{lee2015performance} because the switching is a stochastic process.
The benefit when applying this stochastic switched linear system structure is that the asynchrony in \eqref{eqn:async. y^(k+1) update} is naturally taken into account by this framework. Then, the randomness of the asynchronous computing algorithm is represented by certain switching logic.

\vspace{-0.15in}
\section*{STABILITY ANALYSIS AND CONTROL ALGORITHM}
In this section, stability of the state $y^k$ for the asynchronous model will be studied under the switched system framework. For several decades, the stability results for the switched system with stochastic jumping parameters have been well established in the literature \cite{lee2015performance,feng1992stochastic,costa1993stability,fang2002stochastic}.
For instance, the stability is determined by testing the condition $\rho\left(\sum_{i=1}^{\eta}\pi_i(W_i\otimes W_i) \right)<1$ for the i.i.d. switching process (see Corollary 2.7 in \cite{fang2002stochastic}) or $\rho\left(\left(P^{T}\otimes I\right)\text{diag}(W_i\otimes W_i) \right)<1$, where $P$ is the Markov transition probability matrix and diag$(W_i\otimes W_i)$ is the block diagonal matrix formed by $W_i\otimes W_i$, $i=1,2,\ldots,\eta$, for the Markov switching process (see Theorem 1 and 2 in \cite{costa1993stability}).
However, these methods cannot be directly applicable to the stability analysis of massively parallel computing algorithm under asynchronism. One of the biggest concerns is the notorious \textit{scalability problem} (also known as the curse of dimensionality problem) that causes computational intractability, due to extremely large numbers of switching modes $\eta$. 
In \cite{lee2015acc}, Lee \textit{et al.} first addressed this issue, which is briefly explained as follows.

\begin{remark}\label{remark:3.1}\textit{\textbf{(Scalability problem)}}
Although the stochastic switched linear system framework is suitable for modeling the dynamics of the asynchronous algorithm in PQP problems, it brings about an extremely large number of switching modes, causing \textit{computational intractability}.
For example, even if $q=2$ and $N=20$, it results in the switching mode number given by $\eta=q^{N} = 2^{20}$.  In the real implementation, it is infeasible to store such large numbers of matrices for the purpose of stability analysis. 
\end{remark}

In the following, we introduce new stability result for the asynchronous PQP algorithm to avoid the scalability problem stated above.

\begin{proposition}\label{prop: stability}
Consider the switched linear system \eqref{eqn:Y^(k+1) = W_sigmaY^(k+1) + Z} whose modal matrices have the structure as in \eqref{eqn:W^k} with a switching sequence $\{\sigma_k\}$ governed by any stochastic processes (e.g., i.i.d. or Markov).
This switched linear system is stable \textit{irrespective of} a switching sequence $\{\sigma_k\}$, if for any $k\in\mathbb{N}_0$,
\begin{align}
\sum_{j=1}^{q} \Big\lVert R_{1j}^{k}\Big\rVert_p < 1, \label{eqn: stability condition}
\end{align}
where $q\in\mathbb{K}_q$, $R_{1j}^k$ is the block matrix as in \eqref{eqn:W^k}.
\end{proposition}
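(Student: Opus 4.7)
The plan is to exploit the delay structure of the recursion \eqref{eqn:async. y^(k+1) update} and establish a window-of-length-$q$ contraction in the $p$-norm, rather than trying to reason mode by mode (which would hit the scalability obstacle in Remark \ref{remark:3.1}). First I would note that since the switching index $\sigma_k$ can take only finitely many values $\{1,\dots,\eta\}$, the block matrices $R_{1j}^{k}$ also take only finitely many values as $k$ varies; hence the pointwise hypothesis $\sum_{j=1}^{q}\|R_{1j}^{k}\|_{p}<1$ for every $k$ is equivalent to the uniform bound
$$
\gamma\;:=\;\sup_{k\in\mathbb{N}_{0}}\sum_{j=1}^{q}\bigl\|R_{1j}^{k}\bigr\|_{p}\;<\;1.
$$
This step removes any dependence on the specific switching sequence $\{\sigma_{k}\}$.

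Next, I would analyze the homogeneous part of the recursion (treating the constant term $B$ separately at the end by superposition, since $B$ is time-invariant and the homogeneous dynamics is what governs stability). Applying the triangle inequality and submultiplicativity of $\|\cdot\|_{p}$ to \eqref{eqn:async. y^(k+1) update} gives
$$
\bigl\|y^{k+1}\bigr\|_{p}\;\le\;\sum_{j=1}^{q}\bigl\|R_{1j}^{k}\bigr\|_{p}\bigl\|y^{k-j+1}\bigr\|_{p}.
$$
Define the sliding maximum $V_{k}:=\max_{0\le i\le q-1}\|y^{k-i}\|_{p}$. Since every term on the right is bounded by $V_{k}$ and the coefficients sum to at most $\gamma$, I obtain $\|y^{k+1}\|_{p}\le\gamma V_{k}$ and consequently $V_{k+1}\le V_{k}$.

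The key remaining step is to upgrade this monotone non-increase to an actual contraction. Iterating the one-step bound shows that each of $\|y^{k+1}\|_{p},\dots,\|y^{k+q}\|_{p}$ is dominated by $\gamma V_{k}$ (since $V_{k+1},\dots,V_{k+q-1}$ are all $\le V_{k}$). Consequently $V_{k+q}\le\gamma V_{k}$, and an induction on blocks of length $q$ yields $V_{k+nq}\le\gamma^{n}V_{k}\to 0$, hence $\|y^{k}\|_{p}\to 0$ irrespective of the switching sequence. Lifting this conclusion from $y^{k}$ to the augmented state $Y^{k}$ of \eqref{eqn:Y^(k+1) = W_sigmaY^(k+1) + Z} is immediate, since $\|Y^{k}\|_{p}$ is controlled by $\max_{0\le i\le q-1}\|y^{k-i}\|_{p}=V_{k}$. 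The non-homogeneous case with $C$ then follows from standard bounded-input arguments: exponential decay of the homogeneous dynamics combined with a bounded forcing term yields bounded, convergent trajectories.

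The main obstacle I anticipate is precisely the block-of-$q$ argument: a naive one-step bound is not a contraction, because an old large value $\|y^{k-q+1}\|_{p}$ sitting in the window could in principle prevent decay; the fact that after $q$ iterations the entire window is refreshed with terms already compressed by $\gamma$ is what makes the delayed recursion contractive. Everything else reduces to routine norm manipulations and the finiteness of the mode set, which together bypass the curse-of-dimensionality flagged in Remark \ref{remark:3.1}.
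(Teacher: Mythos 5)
Your proof is correct, and it takes a genuinely different route from the paper's. The paper argues on the augmented system: for $q=2$ it explicitly expands the products $W^kW^{k-1}\cdots W^0$ of companion-form matrices, uses the replicated initial condition $Y^0=[{y^0}^{T},{y^0}^{T}]^{T}$ to write $y^{k+1}$ as a single coefficient matrix acting on $y^0$, bounds that coefficient's norm by a nested combination of the quantities $\sum_{j}\lVert R_{1j}^{k}\rVert_p$, and closes by induction (the general-$q$ case is only asserted to follow ``in the same context''). You instead work with the scalar delayed inequality $\lVert y^{k+1}\rVert_p\le\sum_{j}\lVert R_{1j}^{k}\rVert_p\,\lVert y^{k-j+1}\rVert_p$ and run a Razumikhin/Halanay-style argument on the sliding maximum $V_k$, showing the whole window is refreshed and compressed by $\gamma$ every $q$ steps. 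Your version buys a proof that is uniform in $q$, valid for arbitrary (non-replicated) initial windows, and comes with an explicit geometric rate $\gamma^{1/q}$ per step; your uniformization of the hypothesis via finiteness of the mode set is also a useful clarification the paper omits. The paper's expansion, for its part, exposes concretely how delayed terms recombine inside the companion product, which is the structural insight motivating the condition. Two small points worth tightening in your write-up: the passage from $y^k$ to $Y^k$ needs only the equivalence of norms on the stacked vector (e.g.\ $\lVert Y^k\rVert_p\le q^{1/p}V_k$), and for the affine term the cleanest finish is to note that $\sum_{j}R_{1j}^{k}=R_{\text{s}}$ for every $k$, so the error $e^k=y^k-y^{\star}$ with $y^{\star}=(I-R_{\text{s}})^{-1}B$ obeys exactly the homogeneous recursion and your contraction applies verbatim --- this is sharper than a generic bounded-input argument (and more careful than the paper, which silently drops $C$).
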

\begin{proof}
For simplicity, consider the case for $q=2$. The most general case for any $q\in\mathbb{N}$ can be proved in the same context. 

When $k=0$, $Y^1$ is computed from \eqref{eqn:Y^(k+1) = W_sigmaY^(k+1) + Z} by
\begin{align*}
Y^1 = \begin{bmatrix}
R_{11}^{0} & R_{12}^{0}\\
I & 0
\end{bmatrix}Y^0 + C,
\end{align*}
where $Y^0 = [{y^0}^{T}, {y^0}^{T}]^{T}$. 
Since $y^1 = \left(R_{11}^0 + R_{12}^{0}\right)y^0$ from the above equation, 
by the sub-multiplicativity and sub-additivity of p-norm we have
\begin{align*}
\lVert y^1 \rVert_p &= \lVert \left(R_{11}^0 + R_{12}^0\right)y^0 \rVert_p
\,\leq \lVert R_{11}^0 + R_{12}^0 \rVert_p \cdot \lVert  y^0 \rVert_p\\
&\leq \left(\lVert R_{11}^0 \rVert_p  + \lVert R_{12}^0 \rVert_p\right) \cdot \lVert  y^0 \rVert_p,
\end{align*}
resulting in the contraction of $y^1$, if $\lVert R_{11}^0\rVert_p + \lVert R_{12}^0\rVert_p < 1$.

When $k=1$, the structure of the matrix $W^1W^0$ is obtained by
\begin{align*}
W^{1}W^{0}&= \begin{bmatrix}
R_{11}^{1} & R_{12}^{1}\\
I & 0
\end{bmatrix}
\begin{bmatrix}
R_{11}^{0} & R_{12}^{0}\\
I & 0
\end{bmatrix}=\begin{bmatrix}
R_{11}^{1}R_{11}^{0} + R_{12}^{1} &\quad R_{11}^{1}R_{12}^{0}\\
R_{11}^{0} & R_{12}^{0}
\end{bmatrix},
\end{align*}
which yields $y^2 = \left(R_{11}^1R_{11}^{0} + R_{12}^1 +  R_{11}^{1}R_{12}^{0} \right)y^0$.
If $\sum_{j=1}^{q} \Big\lVert R_{1j}^{k}\Big\rVert_p < 1$ for $k=0,1$, then $y^2$ is contracting by the following result:
\begin{align*}
&\Big\lVert 
R_{11}^1R_{11}^{0} + R_{12}^1 +  R_{11}^{1}R_{12}^{0}
\Big\rVert_p 
= \Big\lVert 
R_{11}^1\left(R_{11}^{0} + R_{12}^{0}\right) + R_{12}^1
\Big\rVert_p\\
&\leq \lVert R_{11}^{1}\rVert_p\left(\lVert R_{11}^{0}\rVert_p + \lVert R_{12}^{0}\rVert_p\right) + \lVert R_{12}^{1}\rVert_p  < 1.
\end{align*}

When $k=2$, we have
\begin{align*}
&W^{2}W^{1}W^{0}=\\
& \begin{bmatrix}
R_{11}^{2}(R_{11}^{1}R_{11}^{0} + R_{12}^{1}) + R_{12}^{2}R_{11}^{0} &\quad R_{11}^{2}R_{11}^{1}R_{12}^{0} + R_{12}^{2}R_{12}^{0}\\
R_{11}^{1}R_{11}^{0} + R_{12}^{1} & R_{11}^{1}R_{12}^{0}
\end{bmatrix}.
\end{align*}

Similar to the case for $k=0$ and $k=1$, $y^3$ is contracting, if $\sum_{j=1}^{q} \Big\lVert R_{1j}^{k}\Big\rVert_p < 1$ for $k=0,1,2$ as this condition leads to
\begin{align*}
&\Big\lVert 
R_{11}^{2}(R_{11}^{1}R_{11}^{0} + R_{12}^{1}) + R_{12}^{2}R_{11}^{0} + R_{11}^{2}R_{11}^{1}R_{12}^{0} + R_{12}^{2}R_{12}^{0}
\Big\rVert_p \\
&\leq \lVert R_{11}^{2}\rVert_p\bigg(\lVert R_{11}^{1}\rVert_p\Big(\lVert R_{11}^{0}\rVert_p + \lVert R_{12}^{0}\rVert_p\Big) + \lVert R_{12}^{1}\rVert_p\bigg)\\
& \quad + \lVert R_{12}^{2}\rVert_p \left( \lVert R_{11}^{0}\rVert_p + \lVert R_{12}^{0}\rVert_p \right) < 1.
\end{align*}
Thus, by induction it can be concluded that $y^k$ is contracting if the condition \eqref{eqn: stability condition} holds for all $k\in\mathbb{N}_0$.
\end{proof}

Proposition \ref{prop: stability} states that the switched linear system \eqref{eqn:Y^(k+1) = W_sigmaY^(k+1) + Z} is stable if $\sum_{j=1}^{q} \big\lVert R_{1j}^{k}\big\rVert_p < 1$ for \textit{all} $k\in\mathbb{N}_0$. Unfortunately, it is not guaranteed that \eqref{eqn: stability condition} holds during the entire iteration process of asynchronous computations, and hence the asynchronous scheme may become unstable.
To circumvent this concern, we synthesize the following control algorithm.
\begin{algorithm}[h]
\caption{A stabilizing control algorithm}\label{algorithm:1}
\begin{algorithmic}[1]
\State initialize $y^0$, $\epsilon$, $k\gets 0$, $\zeta\gets 1$
\Do
\If {$\sum_{j=1}^{q} \big\lVert R_{1j}^{k}\big\rVert_p < 1$}
\State update $y^{k+1}$ using \eqref{eqn:async. y^(k+1) update}
\State $\zeta\gets 0$
\Else {}
\State $y^{k+1}\gets y^k$
\State $\zeta\gets 1$
\EndIf
\State $k\gets k+1$
\doWhile{$\left\lVert y^{k+1} - y^{k}\right\rVert_p \geq \epsilon$ \textbf{ or } $\zeta\neq 0$}
\end{algorithmic}
\end{algorithm}
In Algorithm \ref{algorithm:1}, $\zeta$ is defined as a masking index to prevent immediate termination of the code when \eqref{eqn: stability condition} is not satisfied. In this way, the code keeps running until both conditions $\left\lVert y^{k+1} - y^{k}\right\rVert_p < \epsilon$ for some positive constant $\epsilon$ and $\zeta=0$ are met. 

The proposed algorithm ensures the secure implementation of the asynchronous PQP algorithm for stability. If the condition \eqref{eqn: stability condition} is satisfied, then $y^k$ is updated asynchronously with the guaranteed convergence of $y^k$ by Proposition \ref{prop: stability}. Otherwise, $y^{k+1}$ remains the same as indicated by the line 7 in Algorithm \ref{algorithm:1}. 
To avoid the issue that Algorithm \ref{algorithm:1} falls into an infinite loop caused by the violation of the condition \eqref{eqn: stability condition} for \textit{every} $k\in\mathbb{N}_0$, we propose the following assumption.
\begin{assumption}
For any \textit{finite} time interval $[k_1, k_2]$, where $k_1,k_2\in\mathbb{N}_0$, there exists time $k\in[k_1,k_2]$ such that \eqref{eqn: stability condition} is satisfied.\label{assump: finite time}
\end{assumption}

Above assumption is not restrictive because the master node will receive more concurrent data for $x_i^k$ from its slave nodes while waiting for the condition \eqref{eqn: stability condition} to be satisfied. 
Thus \eqref{eqn: stability condition} will be eventually satisfied as the iteration count increases.

With the implementation of Algorithm \ref{algorithm:1}, the stability of the asynchronous PQP algorithm through  dual decomposition is then provided as follows.

\begin{theorem}\label{thm: stability}\textit{\textbf{(Asymptotic stability)}}
Consider the asynchronous dual decomposition update \eqref{eqn:async. y^(k+1) update} for PQP problems. With the implementation of Algorithm \ref{algorithm:1}, the quantity of interest $y^k$ is asymptotically stable under Assumption \ref{assump: finite time}.
\end{theorem}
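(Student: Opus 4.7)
The plan is to combine the per-step contraction of the homogeneous dynamics established in Proposition \ref{prop: stability} with the non-expansive nature of the hold branch of Algorithm \ref{algorithm:1}, then to invoke Assumption \ref{assump: finite time} to convert infinitely many contracting updates into convergence. I would first identify a common equilibrium: in \eqref{eqn:async. y^(k+1) update} each $\Phi_i$ belongs to exactly one of the index sets $\mathcal{S}^{k},\ldots,\mathcal{S}^{k-q+1}$, hence $\sum_{j=1}^{q} R_{1j}^{k} = I + \sum_{i=1}^{N}\Phi_i$ for every admissible configuration. Consequently the vector $y^{\star}$ solving $\bigl(\sum_{i=1}^{N}\Phi_i\bigr) y^{\star} + B = 0$ is a fixed point common to every mode, and under the standing assumption that $A$ has full row rank (so that $\sum_{i=1}^{N}\alpha_i A_i Q_i^{-1} A_i^{T}$ is invertible) this $y^{\star}$ is unique and coincides with the KKT dual optimum.

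Next I would pass to the error $\tilde{y}^{k}:=y^{k}-y^{\star}$. Substituting into \eqref{eqn:async. y^(k+1) update} and using the equilibrium identity cancels the affine term $B$, producing the homogeneous recursion $\tilde{y}^{k+1}=\sum_{j=1}^{q} R_{1j}^{k}\,\tilde{y}^{k-j+1}$ at every active step (line 4 of Algorithm \ref{algorithm:1}) and the trivial recursion $\tilde{y}^{k+1}=\tilde{y}^{k}$ at every hold step (line 7). The calculation in the proof of Proposition \ref{prop: stability} then applies verbatim to $\tilde{y}^{k}$: at any active step condition \eqref{eqn: stability condition} is strict, giving $\lVert\tilde{y}^{k+1}\rVert_{p}\leq \gamma_{k}\,M_{k}$ where $M_{k}:=\max_{0\leq s\leq q-1}\lVert\tilde{y}^{k-s}\rVert_{p}$ and $\gamma_{k}:=\sum_{j=1}^{q}\lVert R_{1j}^{k}\rVert_{p}<1$, while at any hold step $\lVert\tilde{y}^{k+1}\rVert_{p}=\lVert\tilde{y}^{k}\rVert_{p}$. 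Either way $M_{k+1}\leq M_{k}$, so the sliding-window maximum serves as a non-increasing Lyapunov function.

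Finally I would establish strict descent infinitely often. By Assumption \ref{assump: finite time} active steps occur infinitely often, and because the admissible matrices $R_{1j}^{k}$ are drawn from a finite catalogue fixed by the problem data $\{\Phi_i\}$, the sub-unit contraction factors admit a uniform bound $\gamma^{\ast}:=\max\{\gamma_{k}:\gamma_{k}<1\}<1$. Tracing the provenance of each window entry at a sufficiently late time back through holds to its most recent active step then shows that after enough active steps have elapsed every entry is bounded by $\gamma^{\ast}$ times an earlier window maximum, yielding $M_{k}\to 0$ and hence $y^{k}\to y^{\star}$. I anticipate the main difficulty to be exactly this bookkeeping: Proposition \ref{prop: stability} only certifies $\gamma_{k}<1$ per active step, not a rate uniformly bounded away from one, and one must rule out the scenario in which repeated holds propagate a large value through the window while subsequent delay patterns produce contraction factors arbitrarily close to unity. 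Finiteness of the admissible mode catalogue, together with the monotonicity of $M_{k}$, is the natural lever for closing this gap.
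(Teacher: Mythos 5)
Your proposal is correct, and it is substantially more complete than the paper's own proof, which consists of the single sentence that the theorem is ``a direct consequence of Proposition \ref{prop: stability} under Assumption \ref{assump: finite time}.'' The paper therefore leaves implicit everything you make explicit: (i) the identification of a common fixed point $y^{\star}$ and the cancellation of the affine term $B$ (Proposition \ref{prop: stability} itself silently drops $C$ and argues only about the homogeneous part); (ii) the treatment of the hold branch $y^{k+1}\gets y^{k}$ of Algorithm \ref{algorithm:1}, which you correctly fold in as a non-expansive step for the sliding-window maximum $M_{k}$; and (iii) the passage from ``$\gamma_{k}<1$ at each active step'' to actual convergence. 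Point (iii) is a genuine gap in the paper's induction-by-example argument, and your fix is sound: since each $R_{1j}^{k}$ is of the form $I+\sum_{i\in\mathcal{S}}\Phi_{i}$ or $\sum_{i\in\mathcal{S}}\Phi_{i}$ for one of finitely many ($q^{N}$) partitions of $\{1,\dots,N\}$ into the $q$ delay slots, the set of attainable values of $\gamma_{k}=\sum_{j}\lVert R_{1j}^{k}\rVert_{p}$ is finite, so $\gamma^{\ast}=\max\{\gamma_{k}:\gamma_{k}<1\}<1$ exists and your provenance-tracing argument gives $M_{k}\le\gamma^{\ast}M_{k_{0}}$ once $q$ steps have elapsed after an active step, whence $M_{k}\to0$ under Assumption \ref{assump: finite time}. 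Two small remarks: the full-row-rank hypothesis you add to guarantee invertibility of $\sum_{i}\alpha_{i}A_{i}Q_{i}^{-1}A_{i}^{T}$ is not stated in the paper, but it is also not needed as a separate assumption, since condition \eqref{eqn: stability condition} holding at even one step forces $\lVert I+\sum_{i}\Phi_{i}\rVert_{p}<1$ and hence invertibility of $\sum_{i}\Phi_{i}$; and your identity $\sum_{j=1}^{q}R_{1j}^{k}=I+\sum_{i=1}^{N}\Phi_{i}$ is exactly the relation the paper records (as $R_{\text{s}}=\sum_{j}R_{1j}^{k}$) only later, in the uniqueness proof. In short, your route buys a rigorous convergence argument where the paper offers an appeal to a proposition that, read strictly, only certifies per-step contraction.
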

\begin{proof}
The result is a direct consequence of Proposition \ref{prop: stability} under Assumption \ref{assump: finite time}.
\end{proof}

Although Theorem \ref{thm: stability} guarantees the asymptotic stability of \eqref{eqn:async. y^(k+1) update}, there is no evidence that $y^k$ under asynchronous communications converges to the solution obtained by the synchronous scheme. This is one of the critical issues in asynchronous computing communities (e.g., see \cite{lee2019uniqueness}) as asynchrony may induce inaccuracy in solutions. We provide the following theorem to show that asynchronous optimal solutions computed by Algorithm \ref{algorithm:1} are not affected by asynchrony at all and thus are identical to the synchronous one.
\begin{theorem}\textit{\textbf{(Uniqueness of optimal solutions)}}
Asynchronous solutions for \eqref{eqn:async. y^(k+1) update} with the implementation of Algorithm \ref{algorithm:1} are identical to the unique synchronous solution.
\end{theorem}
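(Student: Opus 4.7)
The plan is to show that the asynchronous update, restricted to the iterations accepted by Algorithm 1, shares exactly the same fixed-point equation as the synchronous update, after which uniqueness of the solution to that equation closes the argument. The key structural observation is combinatorial: the sets $\mathcal{S}^{k-j+1}$, $j=1,\dots,q$, partition $\{1,2,\dots,N\}$, since every index $i$ has $k_i^{\star}\in\mathbb{K}_q$ assigned to exactly one time slot. Consequently, regardless of the particular delay pattern realized at iteration $k$,
\begin{equation*}
\sum_{j=1}^{q} R_{1j}^{k} \;=\; I + \sum_{i=1}^{N}\Phi_i,
\end{equation*}
which does \emph{not} depend on the switching mode $\sigma_k$. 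This identity is the cornerstone of the proof.

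Next I would invoke Theorem \ref{thm: stability} to guarantee that, with Algorithm \ref{algorithm:1} in force, the sequence $\{y^k\}$ is asymptotically stable; in particular it has a unique limit $y^{\star}_{\mathrm{async}}$, and every delayed copy $y^{k-j+1}$ converges to that same limit. Upon termination of Algorithm \ref{algorithm:1} we necessarily have $\zeta=0$, meaning the final accepted step was the genuine asynchronous update on line 4, namely
\begin{equation*}
y^{k+1} \;=\; R_{11}^{k}y^{k} + R_{12}^{k}y^{k-1} + \cdots + R_{1q}^{k}y^{k-q+1} + B,
\end{equation*}
with $\|y^{k+1}-y^k\|_p<\epsilon$. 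Passing to the limit ($k\to\infty$, $\epsilon\to 0$) and substituting the combinatorial identity above yields the stationarity condition
\begin{equation*}
y^{\star}_{\mathrm{async}} \;=\; \Bigl(I+\sum_{i=1}^{N}\Phi_i\Bigr)\,y^{\star}_{\mathrm{async}} + B,
\qquad\text{i.e.,}\qquad
\Bigl(\sum_{i=1}^{N}\Phi_i\Bigr)y^{\star}_{\mathrm{async}} = -B.
\end{equation*}

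Carrying out the identical fixed-point analysis on the synchronous recursion \eqref{eqn:y^(k+1)=(I-alpha)y^(k+1)-alp(+b)} gives precisely the same linear equation for the synchronous limit $y^{\star}_{\mathrm{sync}}$, since setting $\mathcal{S}^{k}=\{1,\dots,N\}$ recovers the synchronous update as a special realization of the asynchronous one. Because $\sum_{i}\Phi_i=-\sum_{i}\alpha_i A_iQ_i^{-1}A_i^{T}$ is a negative-definite (hence invertible) matrix under the standing hypothesis $Q_i\succ 0$ and $\alpha_i>0$ (with $A$ of full row rank, as is implicit in the duality setup), this linear system admits exactly one solution. Therefore $y^{\star}_{\mathrm{async}}=y^{\star}_{\mathrm{sync}}$, and the corresponding primal variable recovered through $x^{\star}_i=-Q_i^{-1}(A_i^{T}y^{\star}+c_i)$ coincides with the synchronous primal optimizer as well.

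The main technical obstacle I anticipate is the step where the delayed iterates $y^{k-j+1}$ are replaced by a common limit: one must argue that Algorithm \ref{algorithm:1} does not terminate at a spurious transient where the $R_{1j}^k$'s happen to multiply stale, non-converged $y$-values into an accidental match with the tolerance $\epsilon$. This is handled by combining Assumption \ref{assump: finite time} (which rules out indefinite blocking in the else-branch) with the asymptotic stability of Theorem \ref{thm: stability}, ensuring that all components of the augmented state $Y^k$ converge to the same vector before termination becomes possible as $\epsilon\downarrow 0$. Once that is settled, the combinatorial partition identity does the rest of the work.
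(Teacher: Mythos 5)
Your proof is correct and follows essentially the same route as the paper: both arguments hinge on the mode-independent identity $\sum_{j=1}^{q} R_{1j}^{k} = I + \sum_{i=1}^{N}\Phi_i$ and on the invertibility of $\sum_{i}\Phi_i$ (equivalently of $R_{\text{s}}-I$) to force the asynchronous and synchronous stationary points to coincide. If anything you are slightly more careful than the paper, which asserts ``positive definiteness'' of the (in fact negative semidefinite) $\Phi_i$ where the real requirement is the full-row-rank condition on $A$ that you make explicit, and you also justify the passage to the stationary limit that the paper simply assumes.
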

\begin{proof}
The proof begins with the synchronous case of which dynamics is given by
\begin{align}
Y_{\text{sync.}}^{k+1} = W_{\text{sync.}}Y_{\text{sync.}}^{k} + C,\label{eqn:Y_sync}
\end{align}
\vspace{-0.35in}
\begin{align*}
\text{where} \quad W_{\text{sync.}} &=
\begin{bmatrix}
R_{\text{s}} & 0 & 0 &\cdots & 0\\
I & 0 & 0 & \cdots  & 0\\
0 & I & 0 & \cdots & 0\\
\vdots &&\ddots&&\vdots \\
0 & 0 &&I & 0
\end{bmatrix},\quad
R_{\text{s}}  = I + \sum_{i=1}^{N}\Phi_i.
\end{align*}

Note that we also have $R_{\text{s}} = \sum_{j=1}^{q}R_{1j}^k$ for any $k\in\mathbb{N}_0$.
If we denote $Y^{\star}$ and $Y_{\text{sync.}}^{\star}$ as the optimal solution for asynchronous and synchronous scheme, respectively, the error is computed by \eqref{eqn:Y_async} and \eqref{eqn:Y_sync} as follows:
\begin{equation}
Y^{\star} - Y_{\text{sync.}}^{\star} = W^kY^{\star} - W_{\text{sync.}}Y_{\text{sync.}}^{\star}.\label{eqn: Y^* - Y_sync}
\end{equation}
Since the optimal solution is stationary, we have $Y^{\star} = [{y^{\star}}^{T},{y^{\star}}^{T},\ldots, {y^{\star}}^{T}]^{T}$ and $Y_{\text{sync.}}^{\star} = [{y_{\text{sync.}}^{\star}}^{T},{y_{\text{sync.}}^{\star}}^{T},\ldots, {y_{\text{sync.}}^{\star}}^{T}]^{T}$, which from the first row block of \eqref{eqn: Y^* - Y_sync} leads to
\begin{align*}
y^{\star} - y_{\text{sync.}}^{\star} &= \left(\sum_{j=1}^{q}R_{1j}^k\right)y^{\star} - R_{\text{s}}\,y_{\text{sync.}}^{\star}= R_{\text{s}}(y^{\star}-y_{\text{sync.}}^{\star}).
\end{align*}
As the matrix $R_{\text{s}}-I$ is positive definite by the given structure of $R_s$ and the positive definiteness of all $\Phi_i$ matrices, the above equation satisfies $y^{\star}=y_{\text{sync.}}^{\star}$.
\end{proof}

\section*{SIMULATIONS}
\begin{figure*}[h!]
\centering
\subfloat[$y_1^k$]{\includegraphics[scale=0.25]{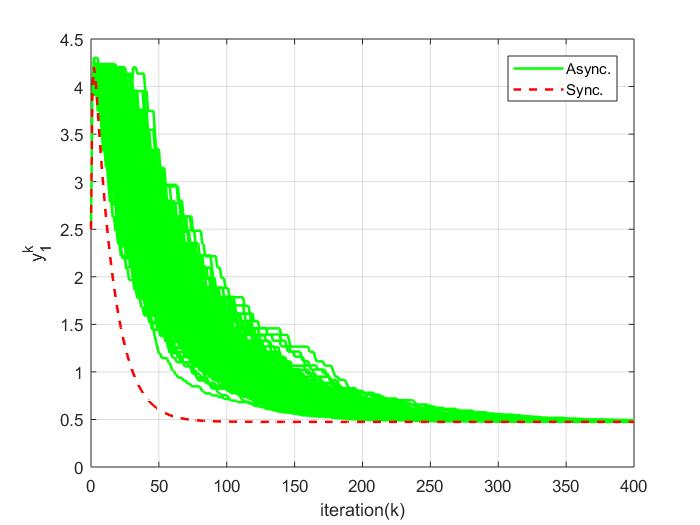}}
\subfloat[$y_2^k$]{\includegraphics[scale=0.25]{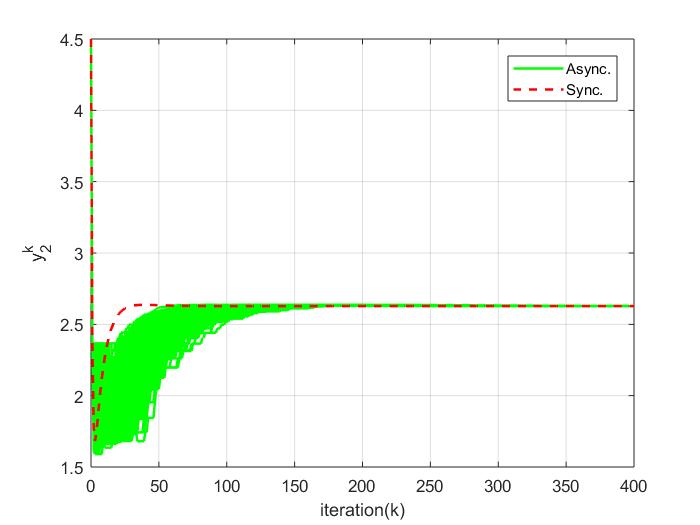}}
\subfloat[$y_3^k$]{\includegraphics[scale=0.25]{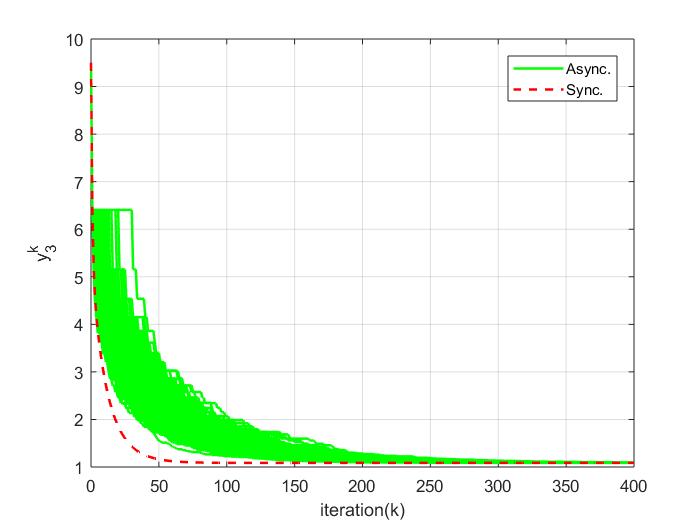}}
\caption{Simulation results for the PQP problem with the proposed asynchronous computing algorithm. The solid lines represent the trajectories of $y_i$ with total 300 Monte Carlo simulations (initial value was deterministically given for all cases). The dashed lines correspond to the synchronous case.}\label{fig:3}
\end{figure*}

To validate the stability of the proposed control algorithm for asynchronous PQP problems as well as the uniqueness of optimal solutions, simulation results are provided in this section. Let us consider the following PQP problem:
\begin{eqnarray*}
&\text{minimize} &\qquad \sum_{i=1}^{N}\left(\dfrac{1}{2}x_i^{T}Q_ix_i + c_i^{T}x_i\right)  \\
&\text{subject to}&\qquad A_ix_i \leq b_i,\qquad\qquad i=1,2,\hdots, N.
\end{eqnarray*}

The positive definite matrix $Q_i$, the matrix $A_i$, and the vectors $b_i$ and $c_i$ were generated by the random number generator in MATLAB. 
The dimension of matrices and vectors are set to be: $Q_i\in\mathbb{R}^{n\times n}, A_i\in\mathbb{R}^{m\times n}$, $c_i\in\mathbb{R}^{n\times 1}$, and $b_i\in\mathbb{R}^{m\times 1}$, $i=1,2,\hdots, N$, where $m=3$, $n=10$, $N=150$. 
The step size is given by $\alpha_i=1.5\times 10^{-6}$, $\forall i$ and $q=30$.
In this case, it turns out that $\rho\left(\lvert I + \sum_{i=1}^{N}\Phi_i\rvert\right)=1.077 > 1$, which does not satisfy the stability condition \eqref{eqn: Bertsekas stability condition}. Therefore, the stability of asynchronous dual decomposition for the given PQP problem is not guaranteed by this condition.

To test the validity of the proposed method, multiple simulations are carried out. At each iteration, the random variable $k_i^*$ is determined by 
the i.i.d. probability $\Pi_i^k = \left[\Pi_i^k(j)\right]=\left[\dfrac{e^{-1.2\times j}}{\sum_{j=1}^{q}e^{-1.2\times j}}\right]$, $j=1,\ldots,q$ for all $i$ and $k$. Notice that the proposed method is, however, applicable to any stochastic processes since the proposed result is independent of a switching sequence $\{\sigma_k\}$.
In Fig. \ref{fig:3}, total $300$ numbers of trajectories for dual variables $y_1$, $y_2$, and $y_3$ are represented by  solid lines. As $y$-update is governed by the given i.i.d. process, the trajectories are different from each other, resulting in the spread of the trajectories in the transient time.  

The computational complexity was the major concern as described in Remark \ref{remark:3.1}, which caused total $\eta=q^N = 30^{150}$ numbers of switching modes in this case. With the developed stabilizing control algorithm, this complexity issue is avoided along with the guaranteed asymptotic stability, irrespective of asynchrony. It is also verified in simulation results that all optimal solutions obtained by the proposed algorithm converged to that by the synchronous scheme shown by the dashed lines in Fig. \ref{fig:3}. Consequently, the proposed stabilizing control algorithm guarantees both the stability as well as the uniqueness of optimal solutions.

\section*{CONCLUSION}
In this paper, we proposed the stabilizing control algorithm for stable implementation of asynchronous PQP algorithm via dual decomposition. To analyze the behavior of asynchrony in distributed and parallel computing, the switched system framework was introduced. 
Although the switching mode number increases exponentially with respect to the distributed node numbers, the developed method, which is not affected by a switching sequence, provides an effective way to analyze the stability without any concerns for such a scalability issue. The uniqueness of the asynchronous optimal solution is also guaranteed for the proposed method. Finally, simulation results are presented to support the validity of the developed methods.


\small
\bibliographystyle{abbrv}
\bibliography{reference}

\end{document}